\documentclass{llncs}

\usepackage{makeidx}  

\usepackage{sgame}
\usepackage{enumerate}
\usepackage{color}
\usepackage{pstricks,pst-node}


\usepackage{url}
\usepackage{algorithmic}
\usepackage{amsmath}
\usepackage{amsfonts}
\usepackage{amssymb}

\usepackage{graphics}
\usepackage{latexsym}
\usepackage{epsf}
\usepackage{epsfig}

\newcommand{\eat}[1]{}
\newcommand{\squishlist}{
 \begin{list}{$\bullet$}
  { \setlength{\itemsep}{0pt}
     \setlength{\parsep}{3pt}
     \setlength{\topsep}{3pt}
     \setlength{\partopsep}{0pt}
     \setlength{\leftmargin}{1.5em}
     \setlength{\labelwidth}{1em}
     \setlength{\labelsep}{0.5em} } }

\newcommand{\squishlisttwo}{
 \begin{list}{$\bullet$}
  { \setlength{\itemsep}{0pt}
     \setlength{\parsep}{0pt}
    \setlength{\topsep}{0pt}
    \setlength{\partopsep}{0pt}
    \setlength{\leftmargin}{2em}
    \setlength{\labelwidth}{1.5em}
    \setlength{\labelsep}{0.5em} } }

\newcommand{\squishend}{
  \end{list}  }

\newcommand{\bfe}[1]{\begin{bfseries}\emph{#1}\end{bfseries}\index{#1}}
\newcommand{\oldbfe}[1]{\begin{bfseries}\emph{#1}\end{bfseries}}

\newcommand{\ES}{\mbox{$\emptyset$}}

\newcommand{\myra}{\mbox{$\:\rightarrow\:$}}

\newcommand{\La}{\mbox{$\:\Leftarrow\:$}}
\newcommand{\Ra}{\mbox{$\:\Rightarrow\:$}}
\newcommand{\tra}{\mbox{$\:\rightarrow^*\:$}}

\newcommand{\A}{\mbox{$\ \wedge\ $}}

\newcommand{\LL}{\mbox{$\ldots$}}

\newcommand{\C}[1]{\mbox{$\{{#1}\}$}}           

\newcommand{\NI}{\noindent}

\newcommand{\II}{\vspace{2 mm}}





%
%
%
%
%

\newcommand{\szkew}[1]{\relax \setbox0=\hbox{\kern -24pt $\displaystyle#1$\kern 0pt }%
\box0}
{\catcode`\@=11 \global\let\ifjusthvtest@=\iffalse}

\newcounter{oldmycaption}













\def\smallromani{\renewcommand{\theenumi}{\roman{enumi}}
\renewcommand{\labelenumi}{(\theenumi)}}








\newcounter{symbol}
\setcounter{symbol}{9} 
\newcommand{\indexsyma}[1]%
{\stepcounter{symbol}\index{zzz1 \thesymbol @\protect#1}}
\newcommand{\indexsymb}[1]%
{\stepcounter{symbol}\index{zzz2 \thesymbol @\protect#1}}
\newcommand{\indexsymc}[1]%
{\stepcounter{symbol}\index{zzz3 \thesymbol @\protect#1}}
\newcommand{\indexsymd}[1]%
{\stepcounter{symbol}\index{zzz4 \thesymbol @\protect#1}}
\newcommand{\indexsyme}[1]%
{\stepcounter{symbol}\index{zzz5 \thesymbol @\protect#1}}

\input{epsf}





\pagestyle{plain}
\begin{document}
\title{Diffusion in Social Networks with Competing Products}
\titlerunning{Diffusion in Social Networks with Competing Products}  
%
\author{Krzysztof R. Apt\inst{1,2} \and Evangelos Markakis\inst{3}
}
\authorrunning{K. R. Apt and E. Markakis} 
%
\tocauthor{Krzysztof R. Apt, Evangelos Markakis}
\institute{CWI, Amsterdam, the Netherlands,\\
\and
University of Amsterdam,\\
\email{apt@cwi.nl},\\
\and
Athens University of Economics and Business,\\
Dept. of Informatics, Athens, Greece\\
\email{markakis@gmail.com}}

\maketitle              


\begin{abstract}

We introduce a new threshold model of social networks, in which the
nodes influenced by their neighbours can adopt one out of several
alternatives.  We characterize the graphs
for which adoption of a product by the
whole network is possible (respectively necessary) and the ones
for which a unique outcome is guaranteed.
These characterizations directly yield
polynomial time algorithms that allow us to determine whether a
given social network satisfies one of the above properties.

We also study algorithmic questions for networks without unique
outcomes. We show that the problem of computing the
minimum possible spread of a product is NP-hard
to approximate with an approximation ratio better than $\Omega(n)$, in contrast to the maximum spread, which is efficiently computable.
We then move on to questions regarding the behavior of a node with respect to adopting some (resp. a given) product. We show
that the problem of determining whether a given node has to adopt some
(resp.~a given) product in all final networks is co-NP-complete.

\end{abstract}


\section{Introduction}

\subsection{Background}

Social networks have become a huge interdisciplinary research area with
important links to sociology, economics, epidemiology, computer
science, and mathematics.  
A flurry of numerous articles and recent books \cite{Jac08,EK10}
shows the growing relevance of
this field as it deals with such diverse topics as epidemics, spread of
certain patterns of social behaviour, effects of advertising, and
emergence of `bubbles' in financial markets.

A large part of research on social networks focusses on the problem of
\emph{diffusion}, that is the spread of a certain event or information
over the network, e.g., becoming infected or adopting a
given product. In the remainder of the paper, we will use as a running example the
adoption of a product, which is being marketed over a social network.

Two prevalent models have been considered for capturing diffusion: the
{\em threshold models} introduced in~\cite{Gra78} and~\cite{Sch78} and
the {\em independent cascade models} studied in~\cite{GLM01}. In
threshold models, which is the focus of our work, each node $i$ has a threshold $\theta(i) \in (0,1]$
and it decides to adopt a product when the total weight of incoming
edges from nodes that have already adopted a product reaches or exceeds $\theta(i)$.  In a special case a node decides to adopt a product if
at least the fraction $\theta(i)$ of its neighbours has done so.
In cascade models, each node that adopts a product can activate
each of his neighbours with a certain probability and each node has
only one chance of activating a neighbour.

Most of research has focussed on the situation in which the players
face the choice of adopting a specific product or not. In this setting, the algorithmic
problem of choosing an initial set of nodes so as to maximize the
adoption of a given product and certain variants of this were studied
initially in~\cite{KKT03}
and in several publications
that followed, e.g., \cite{Che09,MR07}.

When studying social networks from the point of view of adopting new
products that come to the market, it is natural to lift the restriction of a single product.
One natural example is when users choose among competing programs from providers of mobile
telephones. Then, because of lower subscription costs, each owner of
a mobile telephone naturally prefers to choose the same
provider that his friends choose. 
In such situations, the outcome of the
adoption process does not need to be unique. Indeed, individuals with
a low 'threshold' can adopt any product a small group of their friends
adopts. As a result this leads to different considerations than
before.


In the presence of multiple products, diffusion has been investigated
recently for cascade models in~\cite{BKS07,CNWZ07,KOW08}, where new
approximation algorithms and hardness results have been proposed.
For threshold models, an extension to two products has
been recently proposed in~\cite{BFO10}, where the authors examine whether the
algorithmic approach of~\cite{KKT03} can be extended.
Algorithms and hardness of approximation results
are provided for certain variants of the diffusion process.

Game theoretic aspects have also been considered in the case of two
products.  In particular, the behavior of best response dynamics in
infinite graphs is studied in~\cite{Mor00}, when each node has to
choose between two different products. An extension of this model is
studied in~\cite{IKMW07} with a focus on notions of compatibility and
bilinguality, i.e., having the option to adopt both products at an
extra cost so as to be compatible with all your neighbours.

\subsection{Contributions}

We study a new model of a social network in which
nodes (agents) can choose out of \emph{several} alternatives
\emph{and} in which various outcomes of the adoption process are
possible. Our model combines a number of features present in
various models of networks.

It is a threshold model and we assume that the threshold of a node is
a fixed number as in~\cite{Che09} (and unlike \cite{KKT03,BFO10}, where they are random variables).
This is in contrast to Hebb's
model of learning in networks of neurons, the focus of
which is on learning, leading to strengthening of the connections
(here thresholds). In our context threshold should be viewed as a fixed `resistance
level' of a node to adopt a product. In contrast to the SIR model, see,
e.g., \cite{Jac08}, in which a node can be in only two states,
in our model each node can choose out of several
states (products). We also allow that not all nodes have exactly the same set of products to choose from, e.g. due to geographic or income restrictions some products may be available only to a subset of the nodes. If a node changes its state from the initial one, the new
state (that corresponds to the adopted product) is final, as is the case with most of the related literature.

Our work consists of two parts. In the first part (Sections~\ref{sec:reachable}, \ref{sec:unavoidable}, \ref{sec:unique}) we study three basic problems concerning this model. In particular, we find necessary and sufficient conditions for determining whether

\squishlisttwo
\item a specific product will possibly be adopted by all nodes.

\item a specific product will necessarily be adopted by all nodes.

\item the adoption process of the products will yield a unique outcome.
\squishend

For each of these questions, we obtain a characterization with respect to the structure of the underlying graph.

In the second part (Section~\ref{sec:adoption-analysis}), we focus on networks that do not possess a unique outcome and investigate the complexity of various problems concerning the adoption process.
We start with estimating the minimum and maximum number of nodes that may adopt a given product.
Then we move on to questions regarding the behavior of a given node in terms of adopting a given product or some product from its list.
We resolve the complexity of all these problems. As we show, some of these problems are efficiently solvable, whereas the remaining ones are either co-NP-complete or have strong inapproximability properties.

\section{Preliminaries}

Assume a fixed weighted directed graph $G = (V, E)$ (with no parallel
edges and no self-loops), with $n = |V|$ and $w_{ij}\in [0, 1]$ being
the weight of edge $(i, j)$. 
Given a node $i$ of $G$ we denote by
$N(i)$ the set of nodes from which there is an incoming edge to $i$.
We call each $j \in N(i)$ a \oldbfe{neighbour} of $i$ in $G$.
We assume that for each node $i$ such that $N(i) \neq \ES$, $\sum_{j
\in N(i)} w_{ji} \leq 1$.
Further, we have a
\oldbfe{threshold function} $\theta$ that assigns to each node $i\in
V$ a fixed value $\theta(i) \in (0, 1]$.  Finally, we fix a finite set $P$
of alternatives to which we shall refer as \oldbfe{products}.

By a \oldbfe{social network} we mean a tuple $(G,P,p,\theta)$, where
$p$ is a function that assigns to each node of $G$ a non-empty subset
of $P$.
The idea is that each node $i$ is offered a non-empty set
$p(i)\subseteq P$ of products from which it can make its choice.  If
$p(i)$ is a singleton, say $p(i) = \{t\}$, the node adopted the
product $t$. Otherwise it can adopt a product if the total weight of
incoming edges from neighbours that have already adopted it is at
least equal to the threshold $\theta(i)$.
To formalize the questions we want to address, we need to introduce a number of
notions.  Since $G,P$ and $\theta$ are fixed, we often identify each
social network with the function $p$.

Consider a binary relation $\myra$ on social networks.  Denote
by $\tra$ the reflexive, transitive closure of $\myra$.  We call a
reduction sequence $p \tra p'$
\oldbfe{maximal} if for no $p''$ we have $p' \myra
p''$. In that case we will say that $p'$ is a \oldbfe{final} network, given the initial network $p$.


\begin{definition}
Assume an initial social network $p$ and  a network
$p'$. We say that
\squishlist
\item  $p'$ is \oldbfe{reachable} (from $p$) if $p \tra p'$,

\item $p'$ is \oldbfe{unavoidable} (from $p$)
if for all maximal sequences of reductions
$p \tra p''$ we have $p' = p''$,

\item $p$ has a \oldbfe{unique outcome}
if some social network is unavoidable from $p$.
\squishend
\end{definition}


From now on we specialize the relation $\myra$. Given a social
network $p$, and a product $t\in p(i)$ for some node $i$ with $N(i) \neq \ES$,
we use the abbreviation $A(t,i)$ (for `adoption condition
of product $t$ by node $i$') for
\[
\sum_{j \in N(i)\mid p(j) = \{t\} } w_{ji} \geq \theta(i)
\]
When $N(i) = \ES$, we stipulate that $A(t,i)$ holds for every $t\in p(i)$.

\begin{definition}
\mbox{}
\squishlisttwo
\item
We write $p_{1} \myra p_{2}$ if $p_2 \neq p_1$ and
for all nodes $i$, if $p_2(i) \neq p_1(i)$, then $|p_1(i)|  \geq 2$ and
for some $t \in p_1(i)$
\[
p_{2}(i) = \{t\} \mbox{ and $A(t,i)$ holds in $p_1$}.
\]

\item We say that node $i$ in a social network $p$
\begin{itemize}
\item
\oldbfe{adopted product $t$} if $p(i) = \{t\}$,

\item \oldbfe{can adopt product $t$} if $t \in p(i)$, $|p(i)| \geq 2$, and $A(t,i)$ holds in $p$.
\end{itemize}
\squishend
\end{definition}

In particular, a node $i$ with no neighbours and more than one product in $p(i)$
can adopt any product
that is a possible choice for it. 
Note that each modification of the function $p$ results in assigning to a node $i$
a singleton set. Thus, if $p_{1} \tra p_{2}$, then for all nodes
$i$ either $p_{2}(i) = p_{1}(i)$ or $p_{2}(i)$ is a singleton set.

One of the questions we are interested is whether a product $t$ can spread to the whole network.
We will denote this final network by $[t]$, where $[t]$ denotes the constant function $p$ such
that $p(i) = \{t\}$ for all nodes $i$.
Furthermore, given a social network $(G,P,p,\theta)$ and a product $t \in P$ we denote
by $G_{p,t}$ the weighted directed graph obtained from $G$ by removing from it
all edges to nodes $i$ with $p(i) = \{t\}$. That is, in $G_{p,t}$ for all such nodes $i$
we have $N(i) = \ES$ and for all other nodes the set of neighbours in $G_{p,t}$
and $G$ is the same.

If each weight  w$_{j,i}$ in the considered  graph
equals $\frac{1}{|N(i)|}$, then we call the corresponding social network \bfe{equitable}.
Hence in equitable social networks
the adoption condition, $A(t,i)$, holds if at least a fraction $\theta(i)$ of the neighbours of $i$ adopted in $p$ product $t$.

\begin{example}
\label{ex:nets}
As an example for illustrating the definitions, consider the equitable social networks in Figure
\ref{fig:soc}, where $P = \{t_1, t_2\}$ and where we mention next to
each node the set of
products available to it.

\begin{figure}[htbp]
\begin{center}
\ \setlength{\epsfxsize}{4.5cm}
\epsfbox{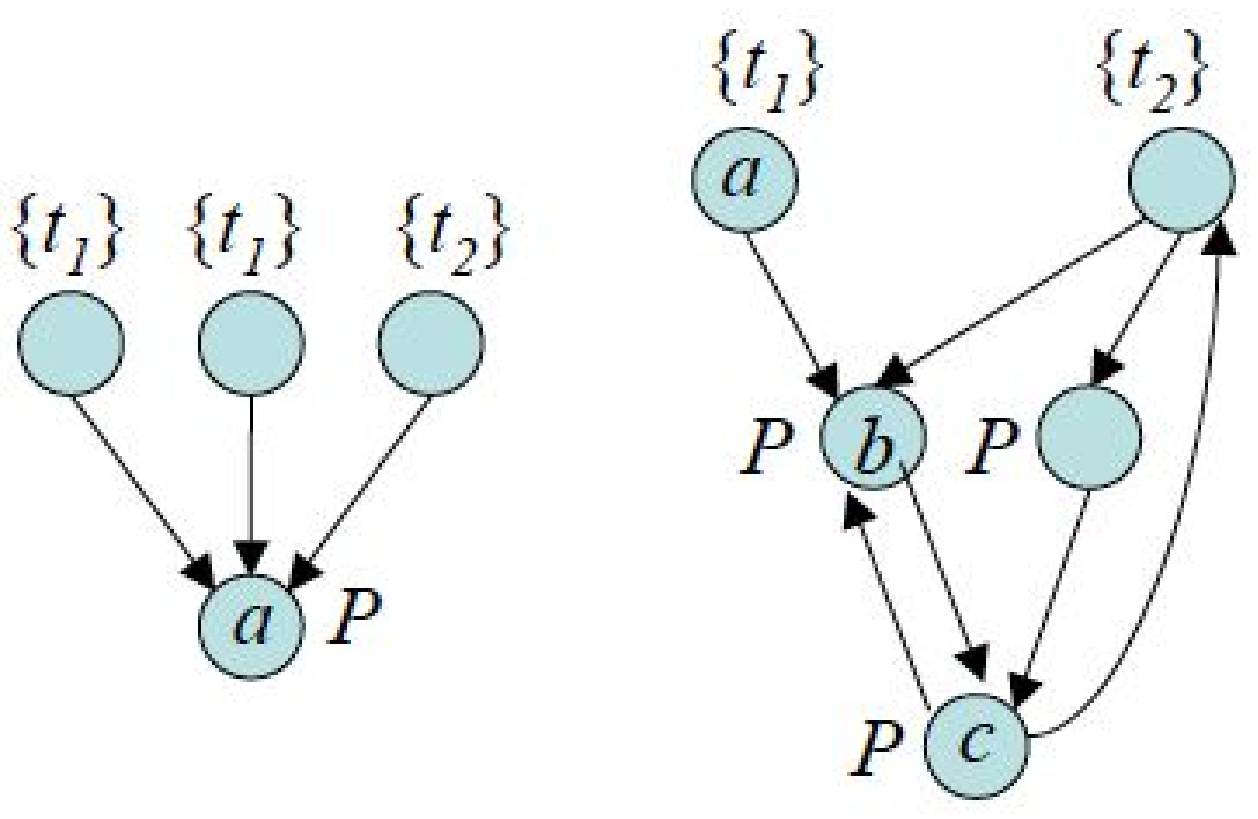}
\end{center}
\caption{Two examples of social networks} \label{fig:soc}
\end{figure}

In the first social network, if $\theta(a) \leq \frac{1}{3}$, then the
network in which node $a$ adopts product $t_1$ is reachable, and so is
the case for product $t_2$.  If $\frac{1}{3} < \theta(a) \leq
\frac{2}{3}$, then only the network in which node $a$ adopts product
$t_1$ is reachable.  Further, if $\theta(a) > \frac{2}{3}$, then none
of the above two networks is reachable.  Finally, the initial network
has a unique outcome iff $\frac{1}{3} < \theta(a)$.


For the second social network the following more elaborate case
distinction lists the possible values of $p$ in the final reachable
networks.


\[
\begin{array}{ll}
\theta(b) \leq \frac{1}{3}  \A \theta(c) \leq \frac{1}{2}& : (p(b) = \{t_1\} \vee p(b) = \{t_2\}) \A (p(c) = \{t_1\} \vee p(c) = \{t_2\}) \\[1mm]
\theta(b) \leq \frac{1}{3}  \A \theta(c) > \frac{1}{2} & : (p(b) = \{t_1\} \A p(c) = P) \: \vee (p(b) = p(c) = \{t_2\}) \\[1mm]
\frac{1}{3} < \theta(b) \leq \frac{2}{3}  \A \theta(c) \leq \frac{1}{2} & : p(b) = p(c) = \{t_2\} \\[1mm]
\frac{1}{3} < \theta(b) \A \theta(c) > \frac{1}{2} & : p(b) = p(c) = P \\[1mm]
\frac{2}{3} < \theta(b) \A \theta(c) \leq \frac{1}{2} & : p(b) = P \A p(c) = \{t_2\}
\end{array}
\]

In particular, when $\frac{1}{3} < \theta(b) \leq \frac{2}{3}$ and
$\theta(c) \leq \frac{1}{2}$, node $b$ adopts product $t_2$
only \emph{after} node $c$ adopts it.
\end{example}

\section{Reachable outcomes}
\label{sec:reachable}

We start with providing necessary and sufficient conditions for a product to be reachable by all nodes.
This is achieved by a structural characterization of graphs
that allow products to spread to the whole graph, given the threshold function $\theta$.
In particular, we shall need the following notion.

\begin{definition}
\label{def:q}
Given a threshold function $\theta$ we call a weighted directed graph
\oldbfe{$\theta$-well-structured} if for some function $level$ that maps
nodes to natural numbers, we have that for all nodes $i$ such that $N(i) \neq \ES$
\begin{equation}
\label{equ:theta}
\sum_{j \in N(i) \mid level(j) < level(i)} w_{ji} \geq \theta(i).
\end{equation}
\end{definition}

In other words, a weighted directed graph is $\theta$-well-structured if levels
can be assigned to its nodes in such a way that for each node $i$
such that $N(i) \neq \ES$, the
sum of the weights of the incoming edges from lower levels is at least
$\theta(i)$.  We will often refer to the function $level$ as a
\bfe{certificate} for the graph being $\theta$-well-structured. Note
that there can be many certificates for a given graph.
Note also that $\theta$-well structured graphs can have cycles. For instance, it is
easy to check that the second social network in Figure \ref{fig:soc} is
$\theta$-well structured when $\theta(i) \leq \frac{1}{3}$ for every node $i$.



We have the following characterization.

\begin{theorem} \label{thm:res1}
Assume a social network $(G,P,p,\theta)$ and a product $top \in P$.
A social network $(G,P,[top],\theta)$ is reachable from $(G,P,p,\theta)$ iff
\\[-6mm]
\begin{itemize}
\item for all $i$, $top \in p(i)$,
\item $G_{p,top}$ is $\theta$-well-structured.
\end{itemize}
\end{theorem}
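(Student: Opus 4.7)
My plan is to prove both directions via a tight correspondence between reduction sequences $p \tra [top]$ and certificates witnessing that $G_{p,top}$ is $\theta$-well-structured: in each direction, the level assigned to a node will record the time at which it first acquires the singleton label $\{top\}$.

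For the ($\Rightarrow$) direction, suppose $p = p_0 \myra \cdots \myra p_k = [top]$. Necessity of $top \in p(i)$ for every $i$ is immediate, because a reduction step either leaves $p(i)$ unchanged or replaces it by a singleton $\{t\}$ with $t \in p(i)$, and singletons remain fixed thereafter. For the well-structuredness, I define $level(i) = \min\{m \mid p_m(i) = \{top\}\}$, so $level(i) = 0$ exactly when $p(i) = \{top\}$. Let $i$ be a node with $N(i) \neq \ES$ in $G_{p,top}$; then $p(i) \neq \{top\}$, so $level(i) = m > 0$ and the step $p_{m-1} \myra p_m$ triggered $A(top,i)$ in $p_{m-1}$. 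Every neighbour $j$ contributing to that sum has $p_{m-1}(j) = \{top\}$, hence $level(j) < level(i)$; since the neighbourhood of $i$ in $G_{p,top}$ coincides with its neighbourhood in $G$ (because $p(i) \neq \{top\}$), inequality (\ref{equ:theta}) holds at $i$.

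For the ($\Leftarrow$) direction, assume the two conditions and fix a certificate $level$ for $G_{p,top}$. Enumerate the distinct level values appearing as $L_0 < L_1 < \cdots < L_r$ and build a reduction sequence whose $\ell$-th step simultaneously sets $p(i) := \{top\}$ for every node $i$ with $level(i) = L_\ell$ and current $p(i) \neq \{top\}$. Such a node automatically has $|p(i)| \geq 2$ and $top \in p(i)$, so it remains to verify $A(top,i)$ in the current network. If $N(i) = \ES$ in $G$ this holds by the paper's convention for sourceless nodes; otherwise $N(i)$ is unchanged in $G_{p,top}$, and an induction on $\ell$ gives that every neighbour $j$ with $level(j) < level(i)$ has already been assigned $\{top\}$, so inequality (\ref{equ:theta}) supplies exactly the weight needed for $A(top,i)$. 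Iterating through all $L_\ell$ exhausts the vertex set and yields $[top]$.

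The only delicate point is the bookkeeping around nodes already labelled $\{top\}$: the definition of $G_{p,top}$ strips their incoming edges so that (\ref{equ:theta}) imposes no requirement on them, matching the fact that they are fixed on $top$ from the outset and never need to re-adopt. Aligning this asymmetry with the dynamics of $\myra$ is what makes the two translations between certificates and reduction sequences go through, and no other ingredient is required.
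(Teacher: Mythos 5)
Your proposal is correct and follows essentially the same argument as the paper: for ($\Rightarrow$) the level of a node is its first adoption time of $top$, and for ($\Leftarrow$) one builds the reduction sequence by induction on levels, adopting $top$ level by level (using the convention for sourceless nodes and the fact that a not-yet-$top$ node keeps its full neighbourhood in $G_{p,top}$). The simultaneous per-level adoption step is legitimate under the paper's definition of $\myra$, so no gap remains.
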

\begin{proof}
\NI
$(\Ra)$
If for some node $i$ we have $top \not\in p(i)$,
then $i$ cannot adopt product $top$ and
$[top]$ is not reachable.

To establish the second condition consider a reduction
sequence
\[
p_{1} \myra p_{2} \myra \LL \myra p_{m}
\]
starting in $p$ and such that $p_{m} = [top]$.

Assign now to each node $i$ the minimal $k$ such that $p_{k+1}(i) =
\C{top}$.
We claim that this definition of the $level$ function shows that $G_{p, top}$ is
$\theta$-well-structured.
Consider a node $i$.
\II

\NI
\emph{Case 1.} $level(i) = 0$.

Then $p(i) = \C{top}$, so by the definition of $G_{p, top}$ we have $N(i) = \ES$ in $G_{p, top}$.
Hence we do not need to argue about these nodes since we only need to ensure condition (\ref{equ:theta}) for nodes with $N(i) \neq \ES$.
\II

\NI
\emph{Case 2.} $level(i) > 0$.

Suppose that $N(i) \neq \ES$
and that $level(i) = k$. By the
definition of the reduction $\myra$
the adoption condition $A(top,i)$ holds in $p_k$, i.e.,
\[
\sum_{j \in N(i)\mid p_k(j) = \{top\} } w_{ji} \geq  \theta(i).
\]


But for each $j \in N(i)$ such that $p_k(j) = \{top\}$ we have by
definition $level(j) < level(i)$.  So (\ref{equ:theta}) holds.
\II

\NI
$(\La)$
Consider a certificate function $level$ showing that $G_{p, top}$ is $\theta$-well-structured.
Without loss of generality we can assume that the nodes in  $G_{p, top}$ such that $N(i) = \ES$
are exactly the nodes of level $0$.
We construct by induction on the level $m$ a reduction sequence $p \tra p''$,
such that for all nodes $i$ we have $top \in p''(i)$ and for all nodes $i$
of level $\leq m$ we have $p''(i) = \C{top}$.

Consider level $0$.  By definition of $G_{p, top}$, a node $i$ is of level $0$ iff it has no
neighbours in $G$ or $p(i) = \C{top}$.  In the former case, by the
first condition, $top \in p(i)$. So $p \tra p''$, where the function
$p''$ is defined by
\[
        p''(i) :=
        \left\{
        \begin{array}{l@{\extracolsep{3mm}}l}
        \C{top}   & \mathrm{if}\  level(i) = 0 \\
        p(i)      & \mathrm{otherwise}
        \end{array}
        \right.
\]
This establishes the induction basis.

Suppose the claim holds for some level $m$. So we have $p \tra
p'$, where for all nodes $i$ we have $top \in p'(i)$ and for all
nodes $i$ of level $\leq m$ we have $p'(i) = \C{top}$.

Consider the nodes of level $m+1$.
For each such node $i$ we have  $top \in p'(i)$, $N(i) \neq \ES$ and
\[
\sum_{j \in N(i) \mid level(j) < level(i)} w_{ji} \geq \theta(i).
\]

By the definition of $G_{p, top}$
the sets of neighbours of $i$ in $G$ and $G_{p, top}$ are the same.
By the induction hypothesis
for all nodes $j$ such that $level(j) < level(i)$ we have $p'(j) = \{top\}$.

So either node $i$ adopted product $top$ in $p'$ or
can adopt product $top$ in $p'$.
Hence $p' \tra p''$, where
the function
$p''$ is defined by
\[
        p''(i) :=
        \left\{
        \begin{array}{l@{\extracolsep{3mm}}l}
        \C{top}   & \mathrm{if}\  level(i) = m+1 \\
        p'(i)      & \mathrm{otherwise}
        \end{array}
        \right.
\]
Consequently $p \tra p''$, which establishes the induction step.
We conclude $p \tra [top]$.
\end{proof}

Next we show that testing if a graph is $\theta$-well-structured can be
efficiently solved.

\begin{theorem}
\label{thm:theta}
Given a weighted directed graph $G$ and a threshold function $\theta$, we can decide whether
$G$ is $\theta$-well-structured in time $O(n^2)$.
\end{theorem}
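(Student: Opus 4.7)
The plan is to give a greedy algorithm that builds a certificate ``level-by-level,'' mirroring the inductive construction from the $(\La)$ direction of Theorem~\ref{thm:res1}. Maintain a set $S$ of already-assigned nodes, initially empty, and for each node $i$ a running sum $s_i := \sum_{j \in N(i) \cap S} w_{ji}$, initially $0$. Iterate the following step: pick any unassigned node $i$ such that either $N(i) = \ES$ or $s_i \geq \theta(i)$, assign it the next level, add it to $S$, and update $s_k := s_k + w_{ik}$ for every $k$ with $i \in N(k)$. Stop when no such node exists, and declare $G$ to be $\theta$-well-structured iff $S = V$ upon termination.

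For correctness, observe that if the algorithm assigns every node then the resulting $level$ function is a certificate: at the moment $i$ is picked only already-assigned, hence strictly lower-level, neighbours contribute to $s_i$, so condition~(\ref{equ:theta}) is satisfied. Conversely, suppose $G$ does admit some certificate $level^*$. I claim the algorithm cannot stall before exhausting $V$. Indeed, among the currently unassigned nodes pick one, call it $i$, with minimum value of $level^*$; by the inductive hypothesis every neighbour $j$ of $i$ with $level^*(j) < level^*(i)$ has already been assigned, so
\[
s_i \ \geq\ \sum_{j \in N(i)\mid level^*(j) < level^*(i)} w_{ji}\ \geq\ \theta(i),
\]
which means $i$ is eligible for the next iteration.

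For the $O(n^2)$ bound, maintain a queue $Q$ of eligible nodes. Initialize $Q$ with every $i$ for which $N(i) = \ES$, and push $k$ into $Q$ (at most once) whenever an update causes $s_k$ to reach $\theta(k)$. Each of the at most $n$ extractions from $Q$ costs $O(1)$, and the update phase after processing node $i$ costs $O(|\{k : i \in N(k)\}|)$ time. Summing over the processed nodes bounds the total update work by $O(|E|) = O(n^2)$, while reading the graph and computing the initial $s_i$'s is also $O(n^2)$. The only genuinely delicate step is the ``never stalls'' direction of correctness: the algorithm's self-chosen greedy order must be reconciled with an unknown abstract certificate $level^*$, and the minimum-$level^*$ argument is exactly what aligns the two.
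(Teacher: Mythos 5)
Your proposal is correct and is essentially the paper's own approach: a greedy level-assignment algorithm (one node per level rather than the paper's rounds), implemented with a queue and running sums in $O(n^2+|E|)=O(n^2)$ time, where one direction of correctness holds because only already-assigned neighbours contribute to $s_i$, and the other follows from a no-stall argument against an arbitrary certificate, which is a slightly more direct route than the paper's omitted induction matching the algorithm's output to the minimum-level certificate. One small wording point: the ``inductive hypothesis'' you invoke is unnecessary---the minimality of $level^*(i)$ among the unassigned nodes already forces every neighbour $j$ with $level^*(j)<level^*(i)$ to be assigned.
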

\begin{proof}
(Sketch) We claim that the following simple algorithm achieves this:

\squishlisttwo
\item Given a weighted directed graph $G$, first assign level $0$ to
  all nodes with $N(i) = \emptyset$. If no such node exists, output
  that the graph is not $\theta$-well-structured.

\item Inductively, at step $i$, assign level $i$ to each node for
  which condition (\ref{equ:theta}) from Definition~\ref{def:q} is
  satisfied when considering only its neighbours that have been
  assigned levels $0,\LL,i-1$.

\item If by iterating this all nodes are assigned a level, then output
  that the graph is $\theta$-well-structured. Otherwise,
output that $G$ is not $\theta$-well-structured.
\squishend

The above algorithm can be implemented in time $O(n^2 + |E|) =
O(n^2)$, by using the adjacency list representation. 
To prove correctness, note that if the input
graph is not $\theta$-well-structured, then the algorithm will output
No, as otherwise, at termination it would have constructed a level function for a
non-$\theta$-well-structured graph. For the reverse,
suppose a graph $G$ is $\theta$-well-structured.
The idea of the proof is to use a certificate function, in which all nodes are assigned the minimum
possible level. We then prove by induction that this is precisely the level assignment produced by the algorithm and hence it outputs Yes. Due to lack of space, we omit the proof.

\end{proof}



Finally, we end this section by observing that
determining whether a network $[top]$ is reachable can also be solved efficiently.

\begin{theorem} \label{thm:res1-alg}
Assume a social network $(G,P,p,\theta)$ and a product $top \in P$.
There is an algorithm running in time  $O(n^2)$ that determines whether
the social network $(G,P,[top],\theta)$ is reachable.
\end{theorem}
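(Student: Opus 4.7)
The plan is to combine the structural characterization from Theorem~\ref{thm:res1} with the algorithmic result of Theorem~\ref{thm:theta}, so that the whole procedure is just the conjunction of two efficiently checkable conditions.

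First, I would verify condition~(i) of Theorem~\ref{thm:res1}, namely that $top \in p(i)$ for every node $i$. This is a single scan over the vertex set and costs $O(n)$ assuming $|P|$ is a constant, or $O(n \cdot |P|)$ more generally but certainly within $O(n^2)$. If some node fails this test, the algorithm immediately outputs that $[top]$ is not reachable, which is correct by the $(\Rightarrow)$ direction of Theorem~\ref{thm:res1}.

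Next, I would construct the auxiliary graph $G_{p,top}$. This is done by taking $G$ and, for every node $i$ with $p(i) = \{top\}$, deleting its incoming edges (equivalently, setting its neighbour set to $\ES$ in the adjacency list representation). Using the adjacency list representation of $G$, this preprocessing takes time $O(n + |E|) = O(n^2)$. Then I would run the algorithm of Theorem~\ref{thm:theta} on $G_{p,top}$ with the same threshold function $\theta$, which decides in $O(n^2)$ time whether $G_{p,top}$ is $\theta$-well-structured.

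The algorithm outputs Yes iff both checks succeed. Correctness follows directly from Theorem~\ref{thm:res1}: reachability of $[top]$ from $p$ is equivalent to the conjunction of conditions~(i) and~(ii), and each of these is decided correctly by the corresponding subroutine. The total running time is the sum of three $O(n^2)$ steps and is therefore $O(n^2)$. There is no substantive obstacle here; the theorem is essentially an immediate algorithmic corollary of Theorems~\ref{thm:res1} and~\ref{thm:theta}, the only minor point being to observe that building $G_{p,top}$ from $G$ fits within the claimed time bound.
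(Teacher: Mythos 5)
Your proposal is correct and follows exactly the route the paper intends: the theorem is an immediate corollary of the characterization in Theorem~\ref{thm:res1} combined with the $O(n^2)$ test of Theorem~\ref{thm:theta}, together with the trivial membership check and the $O(n+|E|)$ construction of $G_{p,top}$. Nothing further is needed.
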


\section{Unavoidable outcomes}
\label{sec:unavoidable}

Next, we focus on the notion of unavoidable outcomes.
We establish the following characterization.
\begin{theorem} \label{thm:res2}
Assume a social network $(G,P,p,\theta)$ and a product $top \in P$.
A social network $(G,P,[top],\theta)$ is unavoidable iff
\begin{itemize}
\item for all $i$, if $N(i) = \ES$, then $p(i) = \{top\}$,

\item for all $i$, $top \in p(i)$,

\item $G_{p,top}$ is $\theta$-well-structured.
\end{itemize}
\end{theorem}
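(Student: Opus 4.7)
The plan is to adapt the proof of Theorem~\ref{thm:res1} (reachability) with additional care to handle the non-determinism inherent in ``unavoidability''.

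For the $(\Ra)$ direction, since an unavoidable $[top]$ is reached by some (in fact, every) maximal reduction sequence, it is in particular reachable, and Theorem~\ref{thm:res1} immediately yields the second and third conditions. For the first condition I would argue by contraposition: if some node $i$ satisfies $N(i) = \ES$ and $|p(i)| \geq 2$, then $i$ may adopt any product in $p(i)$ (since nodes with no neighbours can adopt any product in their list). Picking one other than $top$ yields a valid reduction step, and extending it to a maximal sequence produces a final network whose value at $i$ is not $\{top\}$, contradicting unavoidability.

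For the $(\La)$ direction, I would proceed in two steps. First I would establish the following key lemma: in any reduction sequence starting from $p$, no node ever reaches a state $\{t\}$ with $t \neq top$. The argument is by taking the smallest index $k$ at which some node $i$ has $p_k(i) = \{t\}$ with $t \neq top$. If $k = 0$ this means $p(i) = \{t\}$, but the second condition gives $top \in p(i)$, so $t = top$, a contradiction. If $k > 0$, then $A(t,i)$ must hold in $p_{k-1}$, hence some neighbour $j$ satisfies $p_{k-1}(j) = \{t\}$; by minimality of $k$ this forces $p(j) = \{t\}$, which again contradicts the second condition.

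Second, I would fix a maximal reduction sequence $p \tra p_m$ together with a certificate $level$ for $G_{p,top}$, chosen (as in the proof of Theorem~\ref{thm:res1}) so that the level-$0$ nodes are exactly those with no neighbours in $G_{p,top}$. I would then show by induction on $k$ that every level-$k$ node $i$ satisfies $p_m(i) = \{top\}$. For the base case, either $N(i) = \ES$ in $G$ (in which case condition one gives $p(i) = \{top\}$) or $p(i) = \{top\}$ by definition of $G_{p,top}$; either way $p_m(i) = \{top\}$. For the inductive step, $\theta$-well-structuredness together with the induction hypothesis gives $A(top,i)$ in $p_m$; the key lemma forces $p_m(i)$ to be either $p(i)$ or $\{top\}$; and maximality of the sequence rules out $p_m(i) = p(i)$ with $|p(i)| \geq 2$, since otherwise $i$ could still adopt $top$. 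This closes the induction and yields $p_m = [top]$. The main obstacle is the key lemma: one must rule out cascades of non-$top$ adoptions using only conditions one and two together with the irreversibility of adoption. Once it is in place, the level-induction mirrors the constructive argument from Theorem~\ref{thm:res1}, with maximality playing the role previously played by an explicit choice of reduction.
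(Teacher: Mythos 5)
Your proposal is correct, but for the $(\La)$ direction it takes a genuinely different route from the paper. The paper argues modularly: it isolates Lemma~\ref{lem:t} (any product that ever gets adopted must initially be available to some node with no neighbours or with a singleton choice set) and Lemma~\ref{lem:top} (if only $top$ is initially adoptable in this sense, then a unique outcome exists), and then obtains Theorem~\ref{thm:res2} by combining Theorem~\ref{thm:res1} with Lemma~\ref{lem:top}: under the three conditions $[top]$ is reachable and final, a unique outcome exists, hence that outcome is $[top]$. Your key lemma (``no node ever adopts $t \neq top$'') is exactly what Lemma~\ref{lem:t} yields once specialized to conditions one and two, but instead of passing through a general uniqueness statement you run the level induction directly inside an \emph{arbitrary} maximal reduction sequence, letting maximality (rather than an explicit construction of a sequence) force each node to adopt $top$. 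This is more self-contained --- it avoids proving the confluence-type Lemma~\ref{lem:top}, whose proof the paper omits --- at the cost of redoing, in the unavoidability setting, the inductive argument that the paper simply reuses from Theorem~\ref{thm:res1}; the paper's decomposition in exchange produces a reusable uniqueness criterion. Your $(\Ra)$ direction matches the natural argument. One small patch is needed in your key lemma: in the case $k>0$ you infer from $A(t,i)$ that some neighbour has $p_{k-1}(j)=\{t\}$, which presupposes $N(i) \neq \ES$, since for $N(i) = \ES$ the adoption condition holds vacuously. That case is excluded by condition one: such a node starts as the singleton $\{top\}$ and can never be reduced (a reduction requires $|p_{k-1}(i)| \geq 2$). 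This is precisely where condition one enters the lemma, as you anticipated but did not spell out.
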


To prove this, we need first a few lemmas, the proofs of which we omit from this version.

\begin{lemma} \label{lem:t}
Suppose that $p \tra p'$ and for some node
$i$ we have $p'(i) = \{t\}$.
Then for some node $j$ such that $N(j) = \ES$ or $p(j)$ is a singleton,
we have $t \in p(j)$.
\end{lemma}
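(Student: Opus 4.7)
The plan is to trace the adoption of $t$ back through the reduction sequence and locate the earliest stage at which some node has $\{t\}$ as its singleton. The key observation is that new adoptions of $t$ require preexisting adopters of $t$ (through the condition $A(t,\cdot)$), so the chain must bottom out either in the initial network or at a node with empty neighbour set.

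Concretely, I would fix a reduction sequence
\[
p = p_0 \myra p_1 \myra \LL \myra p_m = p'
\]
and let $k$ be the \emph{smallest} index such that there exists a node $j$ with $p_k(j) = \{t\}$. Such a $k$ exists since $p'(i) = \{t\}$ witnesses this for $k = m$. Note also that by the definition of $\myra$, every reduction only shrinks product sets (from size $\geq 2$ to a singleton), so $p(j) \supseteq p_r(j)$ for every $r$ and every node $j$.

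Next I would do a case analysis on $k$. If $k = 0$, then $p(j) = \{t\}$ is already a singleton containing $t$, so $j$ itself witnesses the conclusion. Otherwise $k \geq 1$, and since $p_{k-1}(j) \neq p_k(j) = \{t\}$, the definition of $\myra$ gives $|p_{k-1}(j)| \geq 2$, $t \in p_{k-1}(j)$, and $A(t,j)$ holds in $p_{k-1}$. In particular $t \in p_{k-1}(j) \subseteq p(j)$. Now split on whether $N(j) = \ES$: if yes, then $j$ itself witnesses the conclusion with $N(j) = \ES$ and $t \in p(j)$. If no, then the adoption condition
\[
\sum_{j' \in N(j) \mid p_{k-1}(j') = \{t\}} w_{j'j} \geq \theta(j) > 0
\]
forces the existence of some $j' \in N(j)$ with $p_{k-1}(j') = \{t\}$, contradicting the minimality of $k$.

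The step I anticipate having to state carefully is the monotonicity of $p_r(j)$ along a reduction sequence (needed to conclude $t \in p(j)$ from $t \in p_{k-1}(j)$) and the observation that $\theta(j) > 0$ is what rules out the $N(j) \neq \ES$ subcase when $k \geq 1$. No induction on the length of the sequence is actually required; the extremal-index argument does the job in one stroke.
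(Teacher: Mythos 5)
Your argument is correct and complete: the extremal-index choice of the first stage at which some node holds $\{t\}$, together with the monotonicity $p_r(j)\subseteq p(j)$ and the fact that $\theta(j)>0$ makes $A(t,j)$ impossible without an earlier adopter among $N(j)$, is exactly the natural back-tracing argument for this lemma. The paper omits its proof of Lemma~\ref{lem:t}, but your proposal matches the intended reasoning (the lemma's stated intuition that any eventually adopted product must be initially adoptable) and fills the gap correctly.
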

Intuitively, this means that each product eventually adopted can also
be initially adopted (by a possibly different node).


\begin{lemma} \label{lem:top}
Assume a social network $(G,P,p,\theta)$ and a product $top \in P$.
Suppose that
\begin{itemize}
\item for all $i$, if $N(i) = \ES$ or $p(i)$ is a singleton, then $p(i) = \{top\}$.
\end{itemize}
Then a unique outcome of $(G,P,p,\theta)$ exists.
\end{lemma}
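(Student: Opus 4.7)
My plan is to argue, using Lemma~\ref{lem:t}, that only the product $top$ can ever be adopted during reductions from $p$, and then to prove that the reduction relation $\myra$ on networks reachable from $p$ is terminating and confluent, so by Newman's lemma it has a unique normal form, which is precisely the unavoidable outcome.

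First I would establish that, for any reachable $p'$ and any node $i$, if $p'(i)$ is a singleton then $p'(i) = \{top\}$. Indeed, suppose $p \tra p'$ and $p'(i) = \{t\}$. By Lemma~\ref{lem:t}, there is a node $j$ with $N(j) = \ES$ or $p(j)$ a singleton such that $t \in p(j)$. By the hypothesis of the lemma, $p(j) = \{top\}$, so $t = top$. In particular, every single step $p_1 \myra p_2$ with $p \tra p_1$ can only change nodes to $\{top\}$, and each such change requires $A(top, i)$ to hold in $p_1$.

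Next I would observe that the set $S(p') = \{ i \mid p'(i) = \{top\} \}$ strictly grows with each reduction step, so any reduction sequence from $p$ has length at most $|V|$; hence $\myra$ is terminating on networks reachable from $p$. For local confluence, suppose $p' \myra p_1$ and $p' \myra p_2$ are two distinct one-step reductions of some reachable $p'$, with the change-sets $X_1, X_2 \subseteq V$ respectively. By the previous paragraph, $p_k$ is obtained from $p'$ by assigning $\{top\}$ to every node in $X_k$, and for every $i \in X_k$ the condition $A(top, i)$ holds in $p'$. Because $A(top, i)$ is monotone in the set of neighbours adopting $top$, it continues to hold in $p_1$ for each $i \in X_2 \setminus X_1$, and symmetrically. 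So if we let $q$ be obtained from $p'$ by assigning $\{top\}$ to every node in $X_1 \cup X_2$, then $p_1 \myra q$ or $p_1 = q$, and similarly $p_2 \myra q$ or $p_2 = q$. Hence $\myra$ is locally confluent.

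Applying Newman's lemma (termination plus local confluence imply confluence and uniqueness of normal forms), every maximal sequence $p \tra p^*$ terminates at the same network $p^*$. This network is therefore unavoidable from $p$, and so $p$ has a unique outcome. The main point requiring care is the local confluence step: one has to verify that the monotonicity of $A(top, \cdot)$ lets the two branches merge by a single further reduction, which is immediate once we know that only $top$-adoptions can occur.
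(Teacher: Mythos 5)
Your proof is correct, and since the paper explicitly omits its own proof of this lemma (and of Lemma~\ref{lem:t}), there is no official argument to compare against; your route is a natural and complete one. The three ingredients all check out: (a) by Lemma~\ref{lem:t} together with the hypothesis, every singleton arising in any network reachable from $p$ must be $\{top\}$, and since $\theta(i)>0$ the condition $A(t,i)$ can never hold for $t\neq top$ at a node with neighbours, while nodes with $N(i)=\ES$ start at $\{top\}$ and never move; (b) termination is immediate because each step strictly enlarges the set of $top$-adopters; (c) your joinability argument is sound, using that a step is exactly ``set the nodes of the change-set to $\{top\}$'' and that $A(top,\cdot)$ is monotone in the set of $top$-adopters, so the two branches merge into the network obtained by setting $X_1\cup X_2$ to $\{top\}$ (with the degenerate cases $X_1\subseteq X_2$ or $X_2\subseteq X_1$ handled by allowing zero further steps). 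One small observation: what you prove in the confluence step is the diamond property up to reflexivity (strong confluence in Huet's sense), which already yields confluence without invoking termination, so Newman's lemma is not strictly needed -- though combining it with the (trivial) termination bound is perfectly fine and also gives that every maximal sequence is finite and ends in the common normal form, which is exactly what ``unique outcome'' requires under the paper's definitions.
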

Intuitively, this means that if initially only one product can be adopted, then
a unique outcome of the social network exists.

\noindent {\bf Proof of Theorem~\ref{thm:res2}:} (Sketch)
By Theorem \ref{thm:res1} and Lemma \ref{lem:top}.



In analogy to Theorem~\ref{thm:res1-alg}, we also have the following simple fact.
\begin{theorem} \label{thm:res2-alg}
Assume a social network $(G,P,p,\theta)$ and a product $top \in P$.
There is an algorithm, running in time $O(n^2)$, that determines whether
the social network $(G,P,[top],\theta)$ is unavoidable.
\end{theorem}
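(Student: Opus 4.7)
The plan is to apply the characterization from Theorem~\ref{thm:res2} directly: the algorithm simply checks the three necessary and sufficient conditions stated there, and bounds the total time by $O(n^2)$.

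First I would scan the input once and check conditions one and two: for every node $i$, verify that $top \in p(i)$, and additionally, whenever $N(i) = \ES$, verify that $p(i) = \{top\}$. Both tests are local to a single node and can be performed in constant time per node (assuming the product lists are given in a form supporting constant-time membership and size queries, which is standard for a fixed finite product set), so together they take $O(n)$ time. If either condition fails for some node, the algorithm outputs that $[top]$ is not unavoidable; otherwise it proceeds to the third condition.

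Next I would construct the auxiliary graph $G_{p,top}$ from $G$ by deleting all edges entering nodes $i$ with $p(i) = \{top\}$. Using an adjacency-list (or adjacency-matrix) representation this construction takes $O(n^2)$ time in the worst case, since $|E| = O(n^2)$. Then I would invoke the algorithm of Theorem~\ref{thm:theta} on $G_{p,top}$ to decide in $O(n^2)$ whether it is $\theta$-well-structured. The algorithm outputs \emph{yes} iff all three checks succeed.

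For correctness, Theorem~\ref{thm:res2} guarantees that $(G,P,[top],\theta)$ is unavoidable exactly when the three conditions hold, so outputting \emph{yes} in precisely that case is correct. For the complexity, the three phases take $O(n)$, $O(n^2)$, and $O(n^2)$ time respectively, giving $O(n^2)$ overall. There is no substantial obstacle: the entire content of the theorem is the reduction to the characterization of Theorem~\ref{thm:res2} together with the subroutine of Theorem~\ref{thm:theta}, and the only mild point to be careful about is that building $G_{p,top}$ does not blow up the running time beyond $O(n^2)$, which is immediate since we only delete edges from $G$ and never add any.
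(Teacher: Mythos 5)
Your proposal is correct and matches the argument the paper intends (and leaves implicit as a ``simple fact''): combine the characterization of Theorem~\ref{thm:res2} with the $O(n^2)$ test for $\theta$-well-structuredness from Theorem~\ref{thm:theta}, after the straightforward per-node checks and the construction of $G_{p,top}$. Nothing is missing, and the complexity accounting is right.
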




\section{Unique outcomes}
\label{sec:unique}

We now consider the question of when does a network admit a unique outcome.
To answer this, we introduce the following definitions.

\begin{definition}
Given social networks $p, p'$ based on the same graph we say that
\begin{itemize}
\item
node $i$  \bfe{can switch in $p'$ given $p$} if $i$ adopted in $p'$ a product $t$ and for some $t' \neq t$
\[
\mbox{$t' \in p(i) \A$ $A(t',i)$ holds in $p'$,}
\]

\item $p'$ is \bfe{ambivalent given $p$} if it contains a node that
  either can adopt more than one product or can switch in $p'$ given $p$,

\item the reduction $p \myra p'$ is \bfe{fast} if for each node $i$, if
  $i$ can adopt a product in $p$ then $i$ adopted a product in $p'$. Intuitively, $p \myra p'$ is then a `maximal' one-step reduction of $p$.
\end{itemize}
\end{definition}

\begin{definition}
By the \bfe{contraction sequence} of a social network we mean the unique
reduction sequence $p \tra p'$ such that
\begin{itemize}
\item each of its reduction steps is fast,

\item either $p \tra p'$ is maximal or $p'$ is the first network in the sequence $p \tra p'$
that is ambivalent given $p$.
\end{itemize}
\end{definition}

We now formulate a characterization of social networks that admit
a unique outcome. We omit the proof.

\begin{theorem} \label{thm:res3}
A social network admits a unique outcome iff its contraction sequence ends in a
non-ambivalent social network.
\end{theorem}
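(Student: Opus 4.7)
The plan is to prove both directions of the equivalence. For $(\Leftarrow)$, the main idea is that non-ambivalence of the endpoint of the contraction sequence forces every maximal reduction from $p$ to coincide with that endpoint; for $(\Rightarrow)$, ambivalence of the endpoint allows the construction of an alternative reduction that diverges from the contraction.

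For $(\Leftarrow)$, let the contraction sequence be $p = p_0 \myra \LL \myra p_k = p^*$, with $p^*$ non-ambivalent. First, $p^*$ is maximal, because otherwise a further fast step would extend the sequence: non-ambivalence makes the next fast step uniquely defined, which would contradict termination of the contraction sequence. I then show that any maximal reduction $p \tra q$ equals $p^*$ via two inductive compatibility claims. \emph{(a)} By induction on the length of $p \tra q$, whenever $q(i)$ is a singleton $\{t\}$ we have $p^*(i) = \{t\}$: if at some step node $i$ newly adopts $t$, the inductive hypothesis says the neighbours activating $A(t,i)$ already adopt $t$ in $p^*$, so $A(t,i)$ holds in $p^*$ as well; then $p^*(i) \neq \{t\}$ yields either a switch-ambivalence (when $p^*(i) = \{t'\}$ with $t' \neq t$, using $t \in p(i)$) or a contradiction with maximality of $p^*$ (when $|p^*(i)| \geq 2$). \emph{(b)} By induction on the contraction step at which a node is adopted, every singleton assignment in $p^*$ is realised in $q$: the inductive hypothesis supplies the neighbours that activated $A(t,i)$ in the contraction with matching adoptions in $q$, so $A(t,i)$ holds in $q$; maximality of $q$ together with claim (a) then forces $q(i) = \{t\}$. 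Combining (a) and (b) gives $q = p^*$.

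For $(\Rightarrow)$, suppose the contraction ends in an ambivalent $p^*$; I exhibit two maximal reductions of $p$ with distinct endpoints, by case analysis on the witness of ambivalence. If some node $i$ can adopt two distinct products $t_1 \neq t_2$ in $p^*$, then the one-step extensions $p^* \myra q_j$ in which $i$ adopts $t_j$ ($j = 1,2$), each completed to a maximal reduction, give final networks that differ at $i$ since adoption is permanent. If instead some node $i$ with $p^*(i) = \{t\}$ can switch to some $t' \neq t$ with $t' \in p(i)$ and $A(t',i)$ holding in $p^*$, I construct an alternative reduction from $p$ that performs only $t'$-adoptions for nodes other than $i$. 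The key structural fact is that $A(t',j')$ depends only on prior $t'$-adoptions, so the cascade of $t'$-adoptions evolves independently of $t$-adoptions. By induction on the contraction step at which a node $j' \neq i$ first adopts $t'$ in the contraction, each such $j'$ can also be adopted to $t'$ in the alternative; after all these steps, $A(t',i)$ holds in the current state (the same $N(i)$-witnesses are present as in $p^*$), and $i$ adopts $t'$. Extending to maximal yields $r$ with $r(i) = \{t'\}$, while extending the contraction itself to maximal yields $r'$ with $r'(i) = \{t\}$. Since $r \neq r'$, $p$ admits no unique outcome.

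The main obstacle I expect is the switch subcase of $(\Rightarrow)$: one must carefully argue that the alternative reduction is well formed. The two enabling observations are that reductions are never forced, so at each step one may perform a strict subset of the available adoptions, and that cascades for distinct products are driven by independent adoption conditions, so $i$'s $t$-adoption can be postponed without disturbing the $t'$-cascade. Once this is made precise, the remaining verification is routine bookkeeping over the structure of the contraction sequence.
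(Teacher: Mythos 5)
Your proposal is correct, and since the paper omits its own proof of Theorem~\ref{thm:res3} there is nothing to compare it against; your argument is a sound way to fill that gap. Both directions check out: in $(\Leftarrow)$ your two compatibility claims (every new adoption along an arbitrary maximal reduction is reproduced in the contraction endpoint $p^*$, and conversely every adoption of the contraction is forced in any maximal reduction) do combine to give $q = p^*$, using exactly the right case split --- the switch clause of ambivalence kills the case $p^*(i)=\{t'\}$ with $t'\neq t$, and maximality of $p^*$ kills the case $|p^*(i)|\geq 2$. In $(\Rightarrow)$ the ``adopt two products'' case is immediate, and your replay of only the $t'$-adoptions of the contraction in the switch case is valid because $A(t',\cdot)$ is monotone in the set of nodes whose value is exactly $\{t'\}$, adoptions of other products never appear in that sum, singleton values persist, and reduction steps may change any nonempty subset of the eligible nodes, so postponing $i$'s $t$-adoption is legitimate. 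The only points you should spell out in a full write-up are the small structural facts you implicitly use throughout: along any reduction a node's value either stays equal to its initial $p(i)$ or becomes a permanent singleton (so a newly adopting node always has $t\in p(i)$ and pre-adoption value $p(i)$), and maximal extensions always exist because each step turns at least one value of size $\geq 2$ into a singleton, bounding every reduction sequence by $n$ steps.
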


\begin{corollary}
\label{cor:unique}
Assume a social network $(G,P,p,\theta)$ such that
\\[-6mm]
\begin{itemize}
\item for all nodes $i$ we have $\theta(i) > \frac{1}{2}$,
\item for all $i$, if $N(i) = \ES$, then $p(i)$ is a singleton.
\end{itemize}
Then $(G, P, p, \theta)$  admits a unique outcome.
\end{corollary}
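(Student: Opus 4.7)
The plan is to invoke Theorem \ref{thm:res3} and show that, under these hypotheses, the contraction sequence of $(G,P,p,\theta)$ terminates in a non-ambivalent network. In fact I will prove the stronger statement that \emph{every} network $p'$ reachable from $p$ is non-ambivalent given $p$; this easily yields the result because the contraction sequence strictly decreases the number of non-singleton nodes and so must terminate, and by hypothesis it cannot terminate at an ambivalent network.

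The key observation, used throughout, is that $\theta(i) > \tfrac{1}{2}$ combined with $\sum_{j\in N(i)} w_{ji} \le 1$ means that in any social network $q$ and for any node $i$, at most one product $t$ can satisfy $A(t,i)$. Indeed, if both $A(t,i)$ and $A(t',i)$ held for distinct $t,t'$, then the disjoint neighbour sets $\{j\in N(i)\mid q(j)=\{t\}\}$ and $\{j\in N(i)\mid q(j)=\{t'\}\}$ would each contribute weight at least $\theta(i) > \tfrac{1}{2}$, so their total weight would exceed $1$, contradicting $\sum_{j\in N(i)} w_{ji} \le 1$. This immediately rules out the first way a network can be ambivalent (``can adopt more than one product'') in any reachable $p'$.

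The harder part is ruling out the ``switching'' condition. Suppose, for contradiction, that some node $i$ can switch in a reachable $p'$: that is, $p'(i) = \{t\}$ and there is a $t' \neq t$ with $t' \in p(i)$ and $A(t',i)$ holding in $p'$. Since $t' \in p(i)$ and $t \neq t'$, we have $|p(i)| \geq 2$, so $i$ did not start out committed to $t$; hence $i$ acquired $\{t\}$ via some reduction step $q_1 \myra q_2$ along a sequence $p \tra q_1 \myra q_2 \tra p'$, where $A(t,i)$ held in $q_1$. Because any reduction transforms a non-singleton $p_1(j)$ into a singleton and singletons persist unchanged, the set of neighbours of $i$ having already adopted $t$ can only grow as we pass from $q_1$ to $p'$. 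Thus the weight of $\{t\}$-adopters in $N(i)$ is still at least $\theta(i) > \tfrac{1}{2}$ in $p'$. But $A(t',i)$ holding in $p'$ would require another, disjoint contribution of weight $>\tfrac{1}{2}$ from $\{t'\}$-adopters in $N(i)$, again violating $\sum_{j\in N(i)} w_{ji} \le 1$. This contradiction shows no reachable network is ambivalent given $p$.

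Finally, the contraction sequence $p \tra p^*$ strictly reduces the number of non-singleton nodes at every step (each fast reduction commits at least one node to a singleton, and singletons persist), so it terminates. By the definition of contraction sequence it terminates either maximally or at the first ambivalent $p^*$; the second possibility is excluded by the preceding paragraph, so $p^*$ is non-ambivalent. Theorem \ref{thm:res3} then yields that $(G,P,p,\theta)$ admits a unique outcome. The second hypothesis of the corollary (nodes with $N(i)=\ES$ are singletons) is needed only to rule out the trivial source of ambivalence in which an isolated node can freely adopt any product from a non-singleton $p(i)$. The main technical step is the switching argument, which rests on the monotonicity of adoption along reductions combined with $\theta(i) > \tfrac{1}{2}$.
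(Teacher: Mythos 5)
Your proof is correct and follows the route the paper intends: invoke Theorem~\ref{thm:res3} and observe that $\theta(i) > \frac{1}{2}$ together with $\sum_{j \in N(i)} w_{ji} \leq 1$ (and the singleton condition for nodes without neighbours) makes every reachable network non-ambivalent, so the contraction sequence ends in a non-ambivalent network. The paper leaves these details implicit (cf.\ its remark that nodes with $\theta(i) > \frac{1}{2}$ cannot introduce an ambivalence); your write-up, including the monotonicity argument for the switching case, fills them in correctly along the same lines.
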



The above corollary can be strengthened by assuming that the network is
such that if $\theta(i) \leq \frac{1}{2}$ then $|N(i)| < 2$ or $|p(i)|
= 1$. The reason is that the nodes for which $|N(i)| < 2$ or $|p(i)| =
1$ cannot introduce an ambivalence.

When for some node $i$, $\theta(i) \leq \frac{1}{2}$ holds and neither
$|N(i)| < 2$ nor $|p(i)| = 1$, the equitable
social network still may admit a unique outcome but it does not have
to.  For instance the second social network in
Figure \ref{fig:soc} admits a unique outcome for the last three alternatives (explained in Example~\ref{ex:nets}),
while for the first two is
does not.



\eat{
\begin{figure}
\hrule height0.8pt
\vspace{1pt}
\begin{algorithmic}[1]

\begin{scriptsize}
\STATE Produce the representation with a list of outgoing edges for each node;
\FOR{$i \in V$}
\STATE set $p(i)$ to be the initial list of products available to node $i$
\ENDFOR
\FOR{$j\in V, t\in p(j)$}
\STATE $S_{j,t} := 0$ ;// counts total weight to $j$ from nodes that adopted $t$;
\ENDFOR
\IF{$\exists i\in V$ with $N(i) = \emptyset$ and $|p(i)|\geq 2$}
\RETURN "No unique outcome";
\ENDIF
\STATE $L := \{ i\in V: |p(i)| = 1\}$ ;// initialize $L$ to a list of nodes that already have adopted a product;
\STATE \textbf{if} {$L = \emptyset$} \textbf{return} "Unique outcome" \textbf{endif};
\WHILE{$L\neq \emptyset$}
\STATE $R := \emptyset$;
\FOR{$i\in L$ and $j$ such that $(i, j)\in E$}
\STATE \textbf{if} $i$ has adopted $t$ and $t\in p(j)$ \textbf{then} $S_{j,t} := S_{j,t} + w_{ij} $\textbf{ end if};
\STATE $R := R\cup \{j\}$; // nodes we need to check for ambivalence
\ENDFOR

\FOR{$j\in R$}
\STATE Compute $|\{t: S_{j, t} \geq \theta(j)\}|$ ; //even for nodes that have already adopted a product
\STATE \textbf{if} {$|\{t: S_{j, t} \geq \theta(j)\}| \geq 2$} \textbf{return} "No unique outcome" \textbf{endif};
\IF{$|\{t: S_{j, t} \geq \theta(j)\}| = 1$ and $j$ has not yet adopted $t$}
\STATE node $j$ adopts product $t$;
\ELSE
\STATE $R := R\setminus \{j\}$ ;  // $j$ does not adopt any product;
\ENDIF
\ENDFOR
\STATE $L := R$ // put in $L$ all nodes that adopted a product in last round
\ENDWHILE
\RETURN "Unique outcome" // No further reduction is possible
\end{scriptsize}

\end{algorithmic}
\vspace{1pt}\hrule height 0.8pt
\caption{Pseudocode for the algorithm of Theorem~\ref{thm:res3-alg}} \label{fig:alg}
\end{figure}

} 

Theorem~\ref{thm:res3} also yields an algorithm to test if a network has a unique outcome.
The algorithm simply has to simulate the contraction sequence of a network and determine whether it ends in a non-ambivalent network.
The statement of the algorithm and its analysis are omitted.

\begin{theorem} \label{thm:res3-alg}
  There exists a polynomial time algorithm, running in time $O(n^2 + n|P|)$, that determines whether a
  social network admits a unique outcome. Furthermore, if for all nodes $i$ we have $\theta(i) > \frac{1}{2}$, there is a $O(n^2)$ algorithm.
\end{theorem}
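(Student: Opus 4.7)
The plan is to directly implement the characterization of Theorem~\ref{thm:res3}: simulate the contraction sequence of the input network and test whether it terminates in a non-ambivalent network. The task reduces to organizing the bookkeeping so that each fast step is executed in amortized $O(|N(j)|)$ work per newly activated node $j$, while the ambivalence test can be performed incrementally rather than by rescanning the whole network in every round.

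For every node $j$ and every product $t\in p(j)$ I maintain a partial sum $S_{j,t}$ equal to the total weight incoming to $j$ from those neighbours that have adopted $t$ in the currently simulated network. Initially these sums are computed in $O(|E|)$ time by scanning the outgoing edges of every initial singleton, and whenever a node $i$ newly adopts, the out-neighbours of $i$ are visited once and the corresponding $S_{\cdot,\cdot}$ are updated in $O(1)$ per edge. Alongside, maintain for each $j$ the set $T_j=\{t\in p(j)\mid S_{j,t}\geq \theta(j)\}$. Since $S_{j,t}$ is non-decreasing, an element enters $T_j$ at most once, namely when $S_{j,t}$ first crosses $\theta(j)$, so keeping $T_j$ up to date costs $O(1)$ per edge update and at most $|p(j)|$ insertions over the whole run.

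The simulation proceeds in fast steps. At the outset, flag as immediate ambivalence any isolated node with $|p(i)|\geq 2$ and any initial node with $|T_i|\geq 2$. In each round, process the nodes that newly adopted in the previous round: perform the $S_{j,t}$ updates on their out-neighbours, and for every node $j$ whose $T_j$ was affected, test ambivalence. Node $j$ is ambivalent precisely when either $j$ has not yet adopted and $|T_j|\geq 2$, or $j$ has already adopted some $t_0$ and $T_j\setminus\{t_0\}\neq\ES$. If ambivalence is detected the algorithm reports ``not unique''; otherwise every non-adopted $j$ with $|T_j|=1$ adopts the unique product of $T_j$ and is enqueued for the next round. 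The simulation halts either when ambivalence is found or when no further adoption occurs, and by Theorem~\ref{thm:res3} the answer is ``unique outcome'' iff the latter case arises.

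For the complexity bound, every edge contributes $O(1)$ updates and every $(j,t)$ pair is touched $O(1)$ times when inserted into $T_j$, giving a total of $O(|E|+n|P|)=O(n^2+n|P|)$. Under the extra hypothesis $\theta(i)>\tfrac{1}{2}$ for all $i$, the constraint $\sum_{k\in N(j)}w_{kj}\leq 1$ forces $|T_j|\leq 1$ at every moment: the set $T_j$ collapses to a single slot, no enumeration over $p(j)$ is required, and the cost drops to $O(n^2)$. The main obstacle is not any subtle mathematical point but the amortized accounting that lets the ambivalence check piggyback on the $S_{j,t}$ updates; once this bookkeeping is in place, correctness follows directly from Theorem~\ref{thm:res3}.
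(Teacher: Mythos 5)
Your proposal matches the paper's (omitted) argument: the paper's algorithm likewise simulates the contraction sequence by maintaining, for each node $j$ and each $t\in p(j)$, the accumulated weight $S_{j,t}$ from neighbours that adopted $t$, processing newly adopted nodes in rounds, declaring non-uniqueness as soon as an isolated node has $|p(i)|\geq 2$ or two products reach a node's threshold (including a possible switch for already-adopted nodes), with correctness supplied by Theorem~\ref{thm:res3} and the same $O(n^2+n|P|)$ amortized accounting. For the case $\theta(i)>\frac{1}{2}$ your key observation---that since the incoming weights sum to at most $1$, at most one product can ever satisfy a node's adoption condition, so no ambivalence beyond the initial isolated-node check can arise (equivalently, one can invoke Corollary~\ref{cor:unique})---is the right reason the bound improves, though to literally drop the $n|P|$ term you should add a word on avoiding the per-$(j,t)$ initialization of the sums, e.g.\ by allocating them lazily.
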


For all practical purposes we have $|P| << n$, so even for the general case the running time
would typically be $O(n^2)$.

\section{Product adoption in networks without unique outcomes}
\label{sec:adoption-analysis}

The results of the previous section reveal that many social networks will not
admit a unique outcome. In this section, we consider some natural questions
regarding product adoption
that are of interest for such networks.
We start with two optimization problems.

Suppose that a product $top$ is neither unavoidable by all nodes nor reachable.
We would like then to estimate the
worst and best-case scenario for the spread of this product. That is, starting
from a given initial network $p$, what is the minimum (resp.~maximum)
number of nodes that will adopt this product in a final network (recall
that a final network is one that has been
obtained from some initial network by a maximal sequence of reductions). Hence, the following
two problems are of interest.

\NI {\bf MIN-ADOPTION:} Given a social network $(G,P,p,\theta)$ and a product $top$, find the minimum number of nodes that adopted $top$ in a final network, starting from $(G,P,p,\theta)$.

\NI {\bf MAX-ADOPTION:} Given a social network $(G,P,p,\theta)$ and a product $top$, find the maximum number of nodes that adopted $top$ in a final network, starting from $(G,P,p,\theta)$.

We show that these two problems are substantially different, the first being essentially inapproximable while the second efficiently solvable.

\begin{theorem}
\label{thm:min-max-adoption}
If $n$ is the number of nodes of a network, then
\begin{enumerate}\smallromani
\item It is NP-hard to approximate MIN-ADOPTION with an approximation ratio better than $\Omega(n)$.
\item The MAX-ADOPTION problem can be solved in $O(n^2)$ time.
\end{enumerate}
\end{theorem}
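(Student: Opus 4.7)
For part (ii), I would use a greedy procedure that lets nodes adopt $top$ as soon as they can. Initialize $A := \{i : p(i) = \{top\}\}$, then repeatedly add to $A$ any node $i \notin A$ satisfying $top \in p(i)$, $|p(i)| \geq 2$, and $\sum_{j \in N(i) \cap A} w_{ji} \geq \theta(i)$, until no further addition is possible. Each addition corresponds to a valid single-node reduction step in which $i$ adopts $top$, and once the cascade has terminated we can complete to a final network by applying any further legal reductions among the remaining nodes (no such reduction can enlarge the set of $top$-adopters, since adoption is final). For the matching upper bound, suppose $p^*$ is any reachable final network and consider the order $i_1, i_2, \ldots$ in which nodes adopt $top$ along some reduction sequence $p \tra p^*$. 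A straightforward induction shows $i_k \in A$ for every $k$: at the step where $i_k$ adopts $top$, its $top$-adopting neighbours are contained in the initial $top$-adopters together with $\{i_1, \ldots, i_{k-1}\} \subseteq A$, so $i_k$ would already have satisfied the greedy condition and been included in $A$. Using an adjacency-list representation and maintaining per-node running sums $\sum_{j \in N(i) \cap A} w_{ji}$ updated only when a new member of $A$ is inserted, the algorithm runs in time $O(n^2)$.

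For part (i), my plan is to reduce from 3-SAT with a gap-amplifying padding construction. Given a 3-SAT instance $\phi$ on $k$ variables and $m$ clauses, I build a social network that simulates an assignment: each variable $x_i$ becomes a source node (with $N(x_i) = \emptyset$) whose product set is $\{t_i^T, t_i^F\}$, forcing it to freely commit to one of these two ``truth'' products. Each clause $C_j$ is encoded by a gadget whose clause-node can ``escape'' adopting $top$ only if at least one of its three literal-nodes has adopted the product corresponding to the intended satisfying value. All clause gadgets feed into a single ``trigger'' node $f$ whose $top$-adoption condition is met as soon as any clause-gadget adopts $top$. Finally, $f$ is joined to a block of $N$ independent ``padding'' nodes, each with product set $\{top, \text{other}\}$, incoming edge of weight $1$ from $f$, and threshold $1$, so that each padding node is forced to adopt $top$ whenever $f$ does. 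Then if $\phi$ is satisfiable, the variable nodes' satisfying assignment lets every clause escape, $f$ never adopts $top$, and the minimum adoption is $O(1)$; if $\phi$ is unsatisfiable, every reduction sequence drives at least one clause to adopt $top$, hence $f$ and all $N$ padding nodes as well. Choosing $N$ polynomial in $k + m$ yields a network of total size $n = \Theta(N)$ and an $\Omega(n)$ gap between the two cases, giving the claimed inapproximability.

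The main technical obstacle is the correctness of the clause and trigger gadgets under the multi-product threshold semantics and adversarial reduction orders. The gadgets must ensure that the ``escape'' product at each clause-node can be adopted if and only if a satisfying literal has been set, and that no unintended early reduction lets a clause-node escape without a genuine satisfying literal. I would address this by a careful choice of weights and thresholds so that the escape condition requires weight contributed by one specific literal-node per satisfied literal, with all competing contributions insufficient to trigger any adoption condition in the absence of a satisfying assignment. Parallel care must be taken that the trigger node $f$ cannot escape $top$ once any clause-node has adopted $top$, and that the source variable nodes truly behave as unconstrained ``assignment'' nodes, propagating exactly one truth-product into the clause gadgets they feed.
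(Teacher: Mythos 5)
For part (ii) your greedy cascade is essentially the paper's algorithm (fast reductions restricted to $top$), and your inductive upper-bound argument is sound. One small but real slip: your greedy test $\sum_{j \in N(i) \cap A} w_{ji} \geq \theta(i)$ never fires for nodes with $N(i) = \emptyset$, yet the paper stipulates that $A(top,i)$ holds for \emph{every} $t \in p(i)$ when $N(i)=\emptyset$, so a source node with $top \in p(i)$ and $|p(i)|\geq 2$ can adopt $top$ (and trigger further adoptions). You must seed $A$ with such nodes as well, otherwise the algorithm can undercount the maximum.

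For part (i) you take a genuinely different route from the paper (the paper reduces from PARTITION: a gadget in which nodes $a,b$ can be split between two products $t,t'$ iff the instance has a partition, followed by a long path of $M$ nodes that is forced to adopt $top$ exactly when the split fails), but as written your argument has a genuine gap: the clause and trigger gadgets, which you yourself defer as ``the main technical obstacle,'' are precisely where the proof lives, and the sketch omits the two ingredients that make such gadgets possible in this model. First, the adoption condition for a product $t$ counts only neighbours whose current set is exactly $\{t\}$, so influence cannot be ``translated'' from one product to another along the network; the escape products of a clause node must literally be the variable products $t_i^T,t_i^F$ carried unchanged from the variable sources (a clause node's list would be, e.g., $\{top, t_1^T, t_2^F, t_3^T\}$), and any intermediate literal nodes must relay the same product. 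Second, nothing in this model ``drives'' a node to adopt $top$: a node adopts $top$ in every final network only if $A(top,\cdot)$ holds (so maximality forces some adoption) while no alternative product's condition can ever hold. Hence each clause node needs an incoming edge from a node pre-seeded with $\{top\}$ of weight at least its threshold, and the trigger node $f$ and the padding nodes need a dummy second product that no neighbour can ever adopt -- if $p(f)=\{top\}$ initially the yes-case breaks, and if $f$ has a reachable alternative the no-case breaks. With these choices (e.g.\ weights $1/4$ on the three literal edges and the seed edge into each clause node, $\theta(C_j)=1/4$, weights $1/m$ into $f$ with $\theta(f)=1/m$) your plan does go through, yielding a gap of $O(1)$ versus $\Theta(n)$; but the proposal as submitted does not yet contain this construction or the finality argument showing that in the unsatisfiable case every final network fixes an assignment at the sources and hence forces some clause node, then $f$, then all padding nodes to adopt $top$. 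The paper's PARTITION reduction buys simplicity (one small numeric gadget plus a forced line, no clause gadgets at all); your SAT route, once the gadgets are written down, is more modular but strictly more work.
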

\begin{proof}
\NI $(i)$
We give a reduction from the PARTITION problem, which is: given $n$ positive rational numbers $(a_1,\LL,a_n)$, is there a set $S$ such that $\sum_{i\in S} a_i = \sum_{i\not\in S} a_i$?
Consider an instance $I$ of PARTITION. WLOG, suppose we have normalized the numbers so that $\sum_{i=1}^n a_i = 1$. Hence the question is to decide whether there is a set $S$ such that $\sum_{i\in S} a_i = \sum_{i\not\in S} a_i = \frac{1}{2}$.

We build an instance of our problem with $3$ products, namely $P =
\{top, t, t'\}$, and with the graph shown in
Figure~\ref{fig:reduction}. The number of nodes in the line that
starts to the right of node $e$ is $M = n^{O(1)}$, hence the
reduction is of polynomial time. The weights in those edges is
$1$.  The thresholds of the nodes are $\theta(a) = \theta(b) =
\theta(c) = \theta(d) = \frac{1}{2}$, $\theta(e) = 1/2+\epsilon$, for
some $\epsilon>0$ and for the nodes to the right of $e$ we can
set the thresholds to an arbitrary positive number in $(0,1]$.
Finally, for each node $i\in\{1,\LL,n\}$, we set $w_{i,a} = w_{i, b} =
a_i$. The weights of the other edges can be seen in the figure.

\begin{figure}[htbp]
\begin{center}  \ \setlength{\epsfxsize}{5cm}
\epsfbox{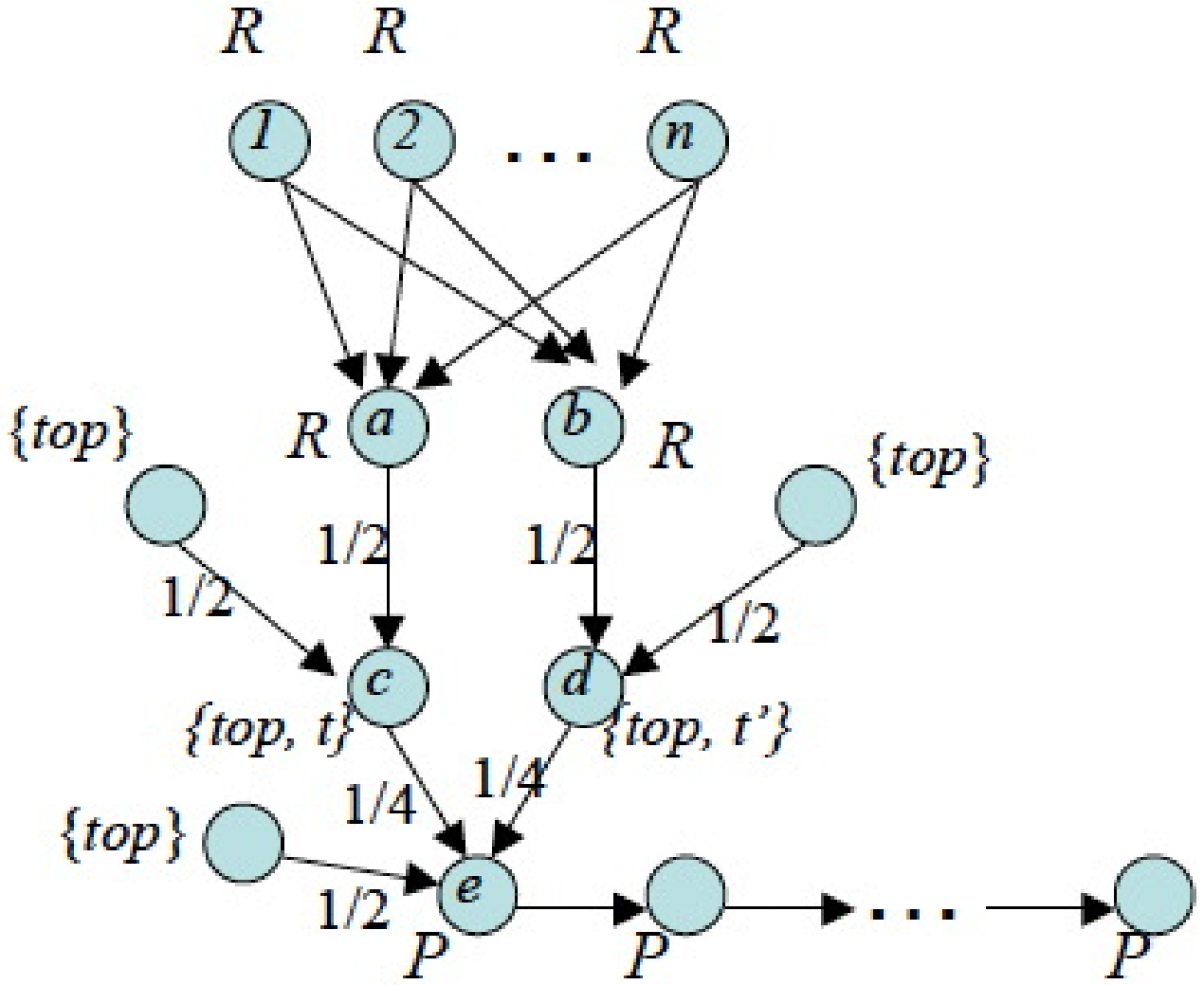}
\end{center}
\caption{The graph of the reduction with $P = \{top, t, t'\}$ and $R = \{t, t'\}$.}\label{fig:reduction}
\end{figure}

We claim that if there exists a solution to $I$, then a final
network exists where the number of nodes that adopted $top$ equals $3$, 
otherwise in all final networks the number of nodes
that adopted $top$ equals $M+5$. This directly yields the
desired result.

Suppose there is a solution $S$ to $I$. Then we can have the nodes
corresponding to the set $S$ adopt $t$ and the remaining nodes from
$\{1,\LL,n\}$ adopt $t'$. This implies that node $a$ can adopt $t$ and
node $b$ can adopt $t'$. Subsequently, node $c$ can adopt $t$ and node
$d$ can adopt $t'$, which implies that node $e$ cannot adopt any
product. Hence a final network exists in which only $3$ nodes adopted
$top$.

For the reverse direction, suppose there is no solution to the
PARTITION problem. Then, no matter how we partition the
nodes $\{1,\LL,n\}$, into $2$ sets $S, S'$, it will always be that for
one of them, say $S$, we have $\sum_{i\in S} a_i > \frac{1}{2}$,
whereas for the other we have $\sum_{i\in S'} a_i < \frac{1}{2}$.
Thus in each final network, no matter which nodes from $\{1,\LL,n\}$
adopted $t$ or $t'$, the nodes $a$ and $b$ adopted the same product.
Suppose that nodes $a$ and $b$ both adopted $t$ (the same
applies if they both adopt $t'$).  This in turn implies that
node $c$ adopted $t$ and node $d$ did not adopt $t'$. Thus, the
node $d$ could only adopt $top$.  But then the only choice for node
$e$ was to adopt $top$ and this propagates along the whole line to the
right of $e$. This completes the proof of $(i)$.

\NI $(ii)$
The algorithm for MAX-ADOPTION resembles the one used in the proof of Theorem~\ref{thm:res3-alg}.
Given the product $top$, it suffices to start with the nodes that have already adopted the product and perform fast reductions but only with respect to $top$ until no further adoption of $top$ is possible.
\end{proof}

We now move on to some decision problems that concern the behavior of a specific node
in a given social network. We consider the following natural questions.

\NI {\bf ADOPTION 1:} (unavoidable adoption of some product)  \\
Determine whether a given node has to adopt some product in all final networks.

\NI {\bf ADOPTION 2:} (unavoidable adoption of a given product) \\
Determine whether a given node has to adopt a given product in all final networks.

\NI {\bf ADOPTION 3:} (possible adoption of some product) \\
Determine whether a given node can adopt some product in some final network.

\NI {\bf ADOPTION 4:} (possible adoption of a given product) \\
Determine whether a given node can adopt a given product in some final network.

\begin{theorem}
\label{thm:adoption-problems}
The complexity of the above problems is as follows:
\begin{enumerate}\smallromani
\item ADOPTION 1 is co-NP-complete.
\item ADOPTION 2 is co-NP-complete.
\item ADOPTION 3 can be solved in $O(n^2|P|)$ time.
\item ADOPTION 4 can be solved in $O(n^2)$ time.
\end{enumerate}
\end{theorem}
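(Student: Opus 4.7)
The plan is to handle the four parts in two groups: the algorithmic statements (iii)--(iv) will follow from an adaptation of the MAX-ADOPTION algorithm used in Theorem~\ref{thm:min-max-adoption}, while the co-NP-completeness results in (i)--(ii) will be obtained by recycling the PARTITION reduction from that same theorem.

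First I would prove (iv). The key is a monotonicity property: in any reduction sequence from $p$, the set of nodes that ever adopt a given product $t$ is contained in the set $S_{\max}$ produced by the greedy $t$-only propagation, namely: initialise $S := \{i : p(i) = \{t\}\}$ and repeatedly add any node $i$ with $t \in p(i)$, $|p(i)| \geq 2$, and $\sum_{j \in N(i) \cap S} w_{ji} \geq \theta(i)$, until no further additions are possible. The containment is shown by induction on the length of the hypothetical reduction sequence leading to a final network $p^*$ in which $v$ adopts $t$: each time a node adopts $t$ in that sequence, its adoption condition is witnessed by nodes that have already adopted $t$, hence by nodes lying in $S_{\max}$ by the inductive hypothesis. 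Conversely, the greedy sequence is itself a valid reduction sequence and can be extended arbitrarily to a final network. So ADOPTION~4 reduces to checking membership of $v$ in $S_{\max}$, computable in $O(n^2)$ exactly as in Theorem~\ref{thm:min-max-adoption}(ii). Part (iii) is then immediate: $v$ adopts some product in some final network iff for some $t \in p(v)$ the algorithm of (iv) returns YES; looping over the at most $|P|$ candidates yields the $O(n^2|P|)$ bound.

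For (i) and (ii), membership in co-NP is routine since any reduction sequence from $p$ has length at most $n$ (each step turns at least one non-singleton $p(i)$ into a singleton, and singleton sets never change back), so a polynomial certificate for the complement is simply a reduction sequence ending in a final network in which $v$ avoids the relevant product. For co-NP-hardness I would recycle the reduction from PARTITION in the proof of Theorem~\ref{thm:min-max-adoption}(i), taking as the distinguished node the special node $e$ and as the distinguished product $top$. There, PARTITION has a solution iff some final network exists in which node $e$ adopts no product (and in particular does not adopt $top$), while PARTITION has no solution iff in every final network $e$ is forced to adopt $top$. Thus the complements of ADOPTION~1 and ADOPTION~2 applied to $(e, top)$ are both polynomially equivalent to PARTITION, giving co-NP-hardness of both.

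The main obstacle is carrying out the monotonicity argument for (iv) cleanly. One has to check that restricting attention to $t$-adoption steps still yields a valid reduction sequence (no step depends on another product being adopted, since $A(t,i)$ only counts $t$-adopters among neighbours) and that letting an intermediate node adopt some other product in a parallel branch of the original sequence cannot help $v$ adopt $t$ later. Once this is nailed down, (iii) is a direct corollary, and the co-NP parts inherit everything they need from the previous theorem.
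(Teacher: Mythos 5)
Your proposal takes essentially the same route as the paper (whose own proof is only sketched): co-NP membership via a polynomial-length reduction sequence as certificate for the complement, co-NP-hardness of (i)--(ii) by reusing the PARTITION gadget of Theorem~\ref{thm:min-max-adoption} with the distinguished node $e$ and product $top$, and (iii)--(iv) via the greedy $t$-only closure that also underlies the MAX-ADOPTION algorithm, with the monotonicity/extension argument you describe being exactly what is needed. One small fix: your greedy closure must also add nodes $i$ with $N(i)=\emptyset$, $t\in p(i)$ and $|p(i)|\geq 2$, since the paper stipulates that $A(t,i)$ holds for such nodes, whereas your weight-sum test (which gives $0<\theta(i)$) would wrongly exclude them and any nodes they could subsequently trigger.
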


The proofs of (i) and (ii) use the reduction given in the proof of Theorem \ref{thm:min-max-adoption}. We omit the proof due to lack of space.


\section{Conclusions and future work}

We have introduced a diffusion model in the presence of multiple
competing products and studied some basic questions.  We have provided
characterizations of the underlying graph structure for determining
whether a product can spread or will necessarily spread to the whole
graph, and of the networks that admit a unique outcome. We
also studied the complexity of various problems that are of interest
for networks that do not admit a unique outcome, such as the problems
of computing the minimum or maximum number of nodes that will adopt a
given product, or determining whether
a given node has to adopt some (resp.~a given) product in all final
networks.

In the proposed model, one could also incorporate game theoretic aspects by considering a strategic game either between
the nodes who decide which product to choose, or between the
producers who decide to offer their products for free to some selected
nodes. In the former case, a game theoretic analysis for players
choosing between two products has been presented in~\cite{Mor00}. An
extension with the additional option of adopting both products has
been considered in~\cite{IKMW07}.
The latter case, with the producers being the players, has been
recently studied in \cite{AFPT10} in a different model than the threshold ones.
We are particularly interested in
analyzing the set of Nash equilibria in the presence of multiple
products, as well as in introducing threshold behavior in the model of \cite{AFPT10}.

\section*{Acknowledgement}
We would like to thank Berthold V\"{o}cking for suggesting to us the first two problems in Section \ref{sec:adoption-analysis}.
{\footnotesize


}

\end{document}